\newtheorem{thm}{Theorem}
\theoremstyle{plain}
\newtheorem{theorem}[thm]{Theorem}
\theoremstyle{plain}
\newtheorem{lemma}[thm]{Lemma}
\theoremstyle{definition}
\newtheorem{definition}[thm]{Definition}
\theoremstyle{remark}
\newtheorem*{remark}{Remark}
\title{Canonicity and Computability in Homotopy Type Theory}
\author{Author: Dmitry Filippov
\and 
Supervisor: Kobi Kremnitzer}
\newcounter{countitems}
\newcounter{nextitemizecount}
\newcommand{\setupcountitems}{%
  \stepcounter{nextitemizecount}%
  \setcounter{countitems}{0}%
  \preto\item{\stepcounter{countitems}}%
}
\newcommand{\computecountitems}{%
  \edef\@currentlabel{\number\c@countitems}%
  \label{countitems@\number\numexpr\value{nextitemizecount}-1\relax}%
}
\newcommand{\nextitemizecount}{%
  \getrefnumber{countitems@\number\c@nextitemizecount}%
}
\newcommand{\previtemizecount}{%
  \getrefnumber{countitems@\number\numexpr\value{nextitemizecount}-1\relax}%
}
\computecountitems\ifnumcomp{\previtemizecount}{>}{3}{\end{multicols}}{}}
\DeclareFontFamily{OMX}{MnSymbolE}{}
\DeclareSymbolFont{MnLargeSymbols}{OMX}{MnSymbolE}{m}{n}
\DeclareFontShape{OMX}{MnSymbolE}{m}{n}{
    <-6>  MnSymbolE5
   <6-7>  MnSymbolE6
   <7-8>  MnSymbolE7
   <8-9>  MnSymbolE8
   <9-10> MnSymbolE9
  <10-12> MnSymbolE10
  <12->   MnSymbolE12
}{}
\DeclareFontShape{OMX}{MnSymbolE}{b}{n}{
    <-6>  MnSymbolE-Bold5
   <6-7>  MnSymbolE-Bold6
   <7-8>  MnSymbolE-Bold7
   <8-9>  MnSymbolE-Bold8
   <9-10> MnSymbolE-Bold9
  <10-12> MnSymbolE-Bold10
  <12->   MnSymbolE-Bold12
}{}
\let\llangle\@undefined
\let\rrangle\@undefined
\DeclareMathDelimiter{\llangle}{\mathopen}%
                     {MnLargeSymbols}{'164}{MnLargeSymbols}{'164}
\DeclareMathDelimiter{\rrangle}{\mathclose}%
                     {MnLargeSymbols}{'171}{MnLargeSymbols}{'171}
\begin{document}

\maketitle

\begin{abstract}
This dissertation gives an overview of Martin L\"{o}f's dependant type theory, focusing on it's 
computational content.

\end{abstract}

\section{Introduction}

Type theories are a powerful mathematical tool that has been used for a long time in Computer Science for the 
purpose of mandating the safe application of functions by limiting their scope to only objects for which their 
use was intended. In this sense, their use in programming languages is similar to their original intent of 
limiting the scope of mathematical constructions in order to prevent a series of paradoxes stemming from 
self-reference. 

Recently the interest in mathematical applications of type theory has grown thanks to type theory's intimate 
connection to constructive mathematics and type theory. It has been utilised in proof assistant software and 
formal proof verification, as well as used as a foundation for constructive mathematics in Univalent Foundations 
Project.

This dissertation is mostly based on the HoTT book \cite{UFP13} and attempts to discuss computational content 
of Univalence axiom and it's potential judgemental handling, in particular presenting the result from \cite{Licata12}.

\section{Constructive Mathematics}

Constructive mathematics is a different philosophical approach to the subject that postulates that mathematical 
objects exist only in so much as us being able to construct and use them, in contrast to classical mathematics 
that presupposes an existing mathematical reality that a mathematician merely discovers. 

These differences manifest in a different handling of truth. Classical mathematics sees the truth as an inherent 
properties of mathematical reality. As such, every proposition is assumed to either be true or to be false 
(Law of Excluded Middle) and proves by contradiction are allowed. In constructive mathematics, on the other hand, 
LEM is rejected; it is thought that a proof by contradiction, while showing that a proposition cannot be 
disproved, does not actually provide an evidence for said proposition. This is most apparent in a case of 
existential statements - a proof of such statement by contradiction does not give a way to construct a witness 
for this statement, and since constructivism equates an existence of a mathematical structure with a possibility 
of it's construction, such argument does not prove existence.

\subsection{Constructive Logic}

Constructive logic can be presented as a Hilbert-style calculus, similar to classical logic, but with LEM removed:

\begin{definition}[Constructive firt order logic]

Axioms:
\begin{itemize}
    \item $\rightarrow 1$: $\phi \rightarrow \chi \rightarrow \phi$
    \item $\rightarrow 2$: $(\phi \rightarrow \chi \rightarrow \psi) \rightarrow ((\phi \rightarrow \chi) \rightarrow \phi \rightarrow \psi)$
    \item $\wedge 1$: $(\phi \wedge \chi) \rightarrow \phi$
    \item $\wedge 2$: $(\phi \wedge \chi) \rightarrow \chi$
    \item $\wedge 3$: $\phi \rightarrow \chi \rightarrow (\phi \wedge \chi)$
    \item $\vee 1$: $\phi \rightarrow (\phi \vee \chi)$
    \item $\vee 2$: $\chi \rightarrow (\phi \vee \chi)$
    \item $\vee 3$: $(\phi \rightarrow \psi) \rightarrow (\chi \rightarrow \psi) \rightarrow (\phi \vee \chi) \rightarrow \psi$
    \item $\bot 1$: $\bot \rightarrow \phi$
\end{itemize}

A single rule of inference is 
\begin{itemize}
    \item Modus Ponens : from $\phi$ and $\phi \rightarrow \psi$ deduce $\psi$
\end{itemize}

This can be extended to a predicate logic with

Following axioms:

\begin{itemize}
    \item $\forall 1$: $(\forall x \phi(x)) \rightarrow \phi(t)$
    \item $\exists 1$: $\phi(t) \rightarrow \exists x \phi (x)$
\end{itemize}

And rules of inference:

\begin{itemize}
    \item from $\psi \rightarrow \phi$ deduce $\psi \rightarrow \forall x \phi$ if $x \notin FV(\psi)$
    \item from $\psi \rightarrow \phi$ deduce $(\exists x \psi) \rightarrow \phi$ if $x \notin FV(\phi)$
\end{itemize}

\end{definition}

A more natural presentation of constructive logic, however, is given in a style of natural deduction \cite{Max07}, with 
each propositional connective given introduction and elimination rules. The reason such representation is more 
natural is due to the connection between constructive logic and Type Theory which we explore below.

The Hilbert style presentation of constructive logic can also be seen as having introduction ($\vee 3$)
and elimination ($\vee 1$, $\vee 2$) rules. From this perspective, implication is being used as a function 
symbol, "eliminating" the inputs to "introduce" the outputs. This is exactly why the axioms for implication 
($\rightarrow 1$, $\rightarrow 2$) have the same shape as combinators in $\lambda$-calculus.

\subsection{Type Theory}

Type Theory is a framework of studying mathematics where every object is given a type which governs which 
properties we expect the object to have and which operations we expect to be able to apply to it, similar to a 
type in a programming language. 

There are two flavours of Type Theories-  extrinsic and intrinsic. In an 
extrinsic type theory, one sees type as merely an assignment of an object to a type; such a theory sees an 
identity function $\lambda x.x$ as the same function when applied to any type, be it an integer or a boolean 
with type being attributed to it at the moment of application as a type derivation of an expression. In an 
intrinsic Type Theory, $\lambda x.x: Bool \rightarrow Bool$ and $\lambda x.x: Int \rightarrow Int$ are two 
different functions, assigned type on construction. In this Dissertation we are concerned with intrinsic Type 
Theories.

In intrinsic type theory, every object is constructed as an element of a particular type using the 
\textbf{constructors} of that type and has a defined set of ways it can be used defined as 
\textbf{eliminators} of that type. Thanks to this explicitly computational behaviour, intrinsic type theories 
are easier to implement on a computer, since instead of functions being defined as infinite relations as they 
are in Set Theory, functions are defined as as procedure using eliminators of a type more akin to how one 
would define a program.

$\lambda$-calculus is the prototypical example of type theory, encoding general computable functions as a 
substitution calculus with function abstraction and application as basic constructor and eliminator.

\subsection{Curry-Howard Correspondence}

Since constructive mathematics treats truth differently, there are different semantics for what it means for 
a proposition to be true; a truthfulness of a proposition is equated with it's proof and a proof of a proposition is
defined by the Brouwer–Heyting–Kolmogorov (BHK) interpretation:
\begin{itemize}
    \item A proof of $P \wedge Q$ is an ordered pair $\langle a, b \rangle$ consisting of a proof a of P and proof b of Q
    \item A proof of $P \vee Q$ is either $\langle 0, a \rangle$ where a is a proof of P or $\langle 1, b \rangle$ where b is a proof of Q
    \item A proof of $P \rightarrow Q$ is a function or a procedure f that converts a proof of P into a proof of Q
    \item A proof of $(\exists x \in S)P$ is a pair $\langle x, a \rangle$ where x is an element of S and a is a proof of P using x
    \item A proof of $(\forall x \in S)P$ is a function F that converts an element x of S into a proof of P
    \item $\bot$ is the absurdity, proposition with no proofs
    \item The formula $¬P$ is defined as $P \rightarrow \bot$ 
\end{itemize}

With this, it is more apparent how constructive logic is related to Type Theory - we can treat each proposition 
as a type of it's proofs, where ways of constructing a proof of a proposition given proofs of other propositions 
are seen as type cosntructors. The truth of a proposition then corresponds to the inhabitability of the 
corresponding type. 

This correspondence was first noted by Curry and Howard and is hence called the Curry-Howard correspondence
\cite{wadler2015propositions}. This correspondence shows that the rules for construction 
and elimination of terms in Type Theories correspond exactly to a natural deduction representation of 
constructive logics, where a BHK interpretation of proof of propositions is expressed as membership of a 
particular type, corresponding to a proposition.

Example of correspondences between type theories and constructive logics:

\begin{itemize}
    \item implicative part of propositional logic is represented by $\lambda$-calculus
    \item proving proposition $\alpha$ from assumptions $\Gamma$ ($ \Gamma \vdash \alpha $) is the same as writing a program that 
given objects of types $\Gamma$ constructs an object of type $\alpha$
    \item logical axioms correspond to rules of introducing a new variable with an unconstrained type
    \item $\rightarrow$introduction rule corresponds to $\lambda$-abstraction
    \item $\rightarrow$elimination rule (Modus Ponens) corresponds to function applications
    \item provability corresponds to type inhabitability
    \item constructive tautology corresponds to inhibitable type
\end{itemize}

Moreover, Type Theory can be used as a semantic domain for the constructive mathematics; the way Type Theories
are defined as a prototypical programming language lends itself nicely to this purpose. As such, Type Theory 
can serve as both the deductive system and semantic model for constructive mathematics, as opposed to 
classical mathematics that needs separate logical and semantic models (first order logic and set theory 
respectively).

\subsection{Dependant Types and Predicates}
Regular $\lambda$-calculus corresponds to implicative part of constructive logic, defining the behaviour of implication 
as a function. Other propositional connectives can be defined as simple types, largely imitating the introduction 
and elimination rules from Natural Deduction.  In order to talk about predicates, however, we need a radically 
different sort of construction.
 
Predicates in logic can be considered as a map between a collection of objects and a collection of propositions.
In constructive logic a predicate maps a type (seen as a collection of objects) to a collection of types 
(seen as propositions). This corresponds to an idea of dependent types - a type family indexed by the 
elements of some other type. For example, a family of types corresponding to cyclic groups of a given order: 

\[ n : \mathbb{N} \vdash Cyclic(n) Type \] 

Usually, such type families are made by a type construction with a free variable of the index type, just like 
predicates in logic are often formulae with a free variable.

Logical quantifiers in the predicate logic bind these free variables by "quantifying" over them. In Type Theory, 
these quantifiers correspond to dependant types which take a type family indexed over a type and return a type 
dependant on one less variable. Under BHK interpretation existential quantification is done by a dependant 
pair and universal quantification is done by the dependant functions. 

\subsection{Equality}
Since we interpret every proposition as a type, propositional equality also
needs to be a type. This is a ditinctive feature of Type Theory that we will
talk more about in next Section.

\section{Martin L\"{o}f Type Theory}

Martin L\"{o}f's theory of types is a more complex type theory than $\lambda$-calculus. It incorporates both 
dependant types and identity types, thus completely expressing the first order constructive logic.

\subsection{Intentionally}

One big deviation of Type Theory from Set Theory is Intentionally. Set Theory is extensional in a sense that 
if two objects (sets) contain the same elements, then they are identical. This principle is 
not true in Type Theory; in Type Theory every type and every element of a type has an intent - a specific 
token it is associated with by construction.

One way of illustrating this is to note that a collection of even divisors of 4 and a collection of even 
numbers smaller than 5 are the same set in Set Theory, but are different types in the Type Theory, since they 
are constructed differently, with a different intent and possess different computational content. This separation 
is particularly useful for polymorphic constructions - a function from a type of even divesers of a number can 
use the computational content present in elements of such type which would be absent from a second type.

This is reminiscent of the way datatypes are treated in Programming. Two  
datatypes that carry the same information might nevertheless be different and might have different overhead 
structure implemented on them; as such, a function that accepts one of these datatypes as an input might 
nevertheless reject the other. 

The Intentionality of Type Theory extends to the elements as well. Martin L\"{o}f Type Theory is intrinsic; 
that is, a type is assigned to the 
token on creation and any single token can only belong to a single type. This is akin to strongly typed 
languages, where a given variable can only have one type which cannot change. This, again, insures that the 
constructions such as functions are only applied to the tokens they are designed for. 

To conclude, the way Set Theory is designed is more akin to early assembly languages, where all data is treated 
as a sequence of 0s and 1s and can be interpreted as an encoding of any datatype post-factum. Intentional Type 
Theory on the other hand is more akin to strongly-typed languages such as Java, where any variable is cast as 
a given type and typing cannot change.

\subsection{Judgements}

All constructions in type theory are executed as judgements - rules that, given their preconditions are met, 
allow us to say something about the theory. The judgements are considered to be "external" statements in a 
meta-theory and as such they cannot interact with the theory itself. One way to think about this is to think 
of the theory as a game and the 
judgements as valid rules; you can get to many constructions starting from nothing, but are only ever allowed 
to make legal moves. 

There are four types of basic judgements in MLTT 
\begin{itemize}
    \item Type judgements, asserting that A is a type $\vdash A \; Type$
    \item (Definitional) equality of types, asserting that A and B are equal types $\vdash A \equiv B$
    \item Typing judgement, asserting that a is a term of type A $\vdash a : A$
    \item (Definitional) equality of terms, asserting that a and b are equal terms of type A $\vdash (a \equiv b) : A$ 
\end{itemize}

Sometimes two further judgements are added:
\begin{itemize}
    \item Context judgement, asserting that $\Gamma$ is a context $\vdash \Gamma \; ctx$
    \item Substitution judgement, asserting that $\theta$ is a substitution from $\Gamma$ to $\Delta$ $\Gamma \vdash \theta : \Delta $
\end{itemize}

Since judgements are handled as simple statements with pre-conditions and deterministic effects, the 
judgemental derivations are decidable. In particular, judgemental equality and type derivations are decidable.

All judgements in type theory are made in a context, usually denoted by capital greek letters ($\Gamma, \Delta$).
A context is a collection of previously made valid judgements and represents the preconditions of a given judgement.

We write $\Gamma \vdash J$ for judgement J in a context $\Gamma$
\subsection{Types}

To define a type one needs to give 4 judgements for that type: 
\begin{itemize}
    \item Type former - a way to construct a Type
    \item Token constructors - a way to construct canonical elements of a type 
    \item Token eliminator - a way to "use up" the canonical elements of a type 
    \item Computation rule - a way in which the prior two judgements interact 
\end{itemize}

Eliminators and computation rules are often bundled together into \textbf{recursors}(often called 
non-dependent eliminator) and \textbf{inductors}(often called dependent eliminator), which are procedures 
(functions) that give a generic way of constructing any function out of the type. 
 
\subsubsection{Basic Types}

Falsity, or bottom $\bot$, is a type with no introduction rules which by extension has no canonical elements.
A recursion principle for this type is that there are no functions with bottom as the domain; there are 0 
cases to consider as there are no canonical elements of it.

Falsity corresponds to a prototypical false proposition, an uninhabited type.

Unit type $\mathbbm{1}$ is a type with precisely one introduction rule - $*$ is an element of $\mathbbm{1}$. 
Consequently, the recursion principle for $\mathbbm{1}$ is that to make a function with $\mathbbm{1}$ as a 
domain you only need to provide an element to which $*$ is mapped.

Unit type corresponds to a prototypical true proposition, with the only canonical proof of it being $*$.

Universes are "large" types of "small" types which are an analogue of Grothendieck universes for type theory. One usually 
assumes an infinite hierarchy of universes $U_0 : U_1, U_1 : U_2, U_2 : U_3 ...$ with every higher universe 
type containing all the types in previous universes.

With Universe types type families can be thought of as functions $A(_) : X \rightarrow U$ and universes make 
some mathematical constructions such as a type of Groups possible. 

\subsubsection{Basic Type Constructors}
The type constructors of type theory are ways of constructing new types from old ones. Under the lense of 
propositions-as-types, those correspond to the binary connectives.

\subsubsection*{Product Type}

Conjunction is represented as a Product Type ($A \times B$). A product type consists of ordered pairs of 
elements, one from A and another from B. As such, a single type constructor for $A \times B$ is a pair

$a: A, b: B \vdash (a,b): A \times B$

An elimination rule for the product type is that given a canonical element $(a,b)$ of $A \times B$,
we can either return a or return b; these correspond to the two projector functions:

$p_1((a,b)) = a$

$p_2((a,b)) = b$

A recursion principle for product types states that a generic function from $A \times B$ may use both arguments,
and so recursor transforms a function with a type $A \rightarrow B \rightarrow C$ into a function with a type 
$(A \times B) \rightarrow C$. This is the un-currying operator.

\subsubsection*{Coproduct Type}
Disjunction is represented as Coproduct Type ($A + B$), otherwise known as disjoint union. Coproduct type 
consists of all the elements of A and B and contains an indicator from which disjunct each element is from. 
It is usually represented as ordered pairs, with first element either 0 or 1 depending on a type of a second 
element. There are two type constructors for coproducts: 
\begin{itemize}
    \item inl takes an element a of A and makes it into an element (0, a) of ($A + B$)
    \item inr takes an element b of B and makes it into an element (1, b) of ($A + B$)
\end{itemize} 

A recursor for Coproduct Type states that to construct a function from ($A + B$) into C, one needs to know 
what to do with elements of A and what to do with elements of B. As such, recursor takes in two functions of 
types $A \rightarrow C$ and $B \rightarrow C$ and returns a function of type $(A + B) \rightarrow C$.

\subsubsection*{Function Type}
Function types in MLTT are similar to ones in $\lambda$-calculus. There is 
one way to define a canonical element of a function type - $\lambda$-abstraction which, given an expression 
$\sigma$ of type A with a free variable x of type B (which also could be thought of as a family of terms of A 
indexed over elements of B) gives a function $\lambda x. \sigma$. The application of $\lambda$-abstraction type constructor 
binds the free variable from the expression it is applied to, eliminating it from the context.

An elimination rule for function types is function application - given a function $\phi: A \rightarrow B$ and 
a term $a: A$, we can apply $\phi$ to get $\phi (a) : B$.

A computation rule says that $\lambda x.\sigma (a) = \sigma [a/x]$.

Function type has a uniqueness principle - $\eta$-equality $f \coloneqq \lambda x.f(x))$. This judgemental 
equality cannot be derived from the rest of type theory,just like in $\lambda$-calculus and so is an optional 
addition. It does, however, make life a lot easier and so we adopt it in this paper. In the context of the 
univalence axiom, $\eta$-equality happens to be equivalent to function extensionality.

Note: Negation in constructive systems is represented as proposition implying absurdity, as opposed to 
having it's own unary logical connective. Hence, the type of "not A" is represented as $A \rightarrow \bot$.

\subsubsection{Dependant types}

As mentioned at the end of Section 2, in order for our Type Theory to be able to handle predicates, we need to 
include dependant types. In MLTT those are dependant products and dependant functions, corresponding to 
quantifiers in logic.

\subsubsection*{Dependant Function Type}

Dependant functions are a more general type of functions where the type of codomain varies with the element of 
the domain. Given a type A and a family of types $B: A \rightarrow \mathcal{U}$, we can construct a type of 
dependant functions $\Pi_{x_A}B(x)$. As per BHK interpretation, such dependant functions correspond to universal 
quantification. 

The constructor and eliminator for this type are the same as in the non-dependant function 
case, except the expression to which  $\lambda$-abstraction is applied needs not have a consistent type.

Same thing goes for eliminator being function application and the computation rule.

One important feature of the dependant functions is that they allow us to define polymorphic functions. In 
order to do so, the dependent function might take a Type (as an element of the universe) as the first input and 
produce a function with that type as a domain. An example of such a polymorphic function is a generic identity:

$\lambda(A:\mathcal{U}).\lambda(x:A).x$

which takes a type as an argument and returns an identity function for that type.

\subsubsection*{Dependant Product Type}

As per BHK interpretation, representing existential quantifiers are the Dependant Products, the types of pairs 
where a second element of a pair can have a varying type depending on the first element $\Sigma_{x:A}B(x)$. 
Similar to the regular products, dependant products 
have a single element constructor:

$a: A, b: B(a) \vdash (a,b): \Sigma_{x:A}B(x)$
s
A recursor for dependant product, is the un-currying operator, for dependent functions. 

A formal definition of this recursor goes like so:

A type of the recursor is 

\[ rec_{\Sigma_{x:A}B(x)}(C, g, (a,b)): \Pi_{C:U} (\Pi_{x:A}B(x) \rightarrow C) \rightarrow (\Sigma_(x:A)B(x)) \rightarrow C \]

Where C is the target type, g is a curried function and (a, b) is the element to which the resulting function 
is applied.

The defining equation for this is 
\[ rec_{\Sigma_{x:A}B(x)}(C, g, (a,b)) \coloneqq g(a)(b)\] 

\subsection{Identity type}

Identity type is the defining feature of intuitionistic type theory; as such, it warrants 
it's own chapter.

It may seem that constructions above comprehensively cover all the logical connectors of first-order logic, 
and as such are sufficient for representing the predicate logic in it's entirety. What this assessment, however, 
misses is that first order logic has another often overlooked primitive symbol - propositional equality.
None of the constructions above are able to handle general propositional equality. Judgemental 
equality equating terms with same canonical representations is too weak of a notion, at least because it must by definition
be decidable. It is also only applicable to closed terms in an empty context - there is no way, for example, to show 
n + 2 judgmentally equals 2 + n. To handle those more complex equalities between terms we must introduce a new predicate -
an identity type. We want a type family that, given any 2 elements a and b of type A represents a type of proofs that
a = b. We call this type Eq(A, a, b) or alternatively $a =_A b$ or $Id_A(a, b)$. It makes sense to define equality to be the smallest 
reflexive, symmetric and transitive relation on a type. As such, the only type constructor for the identity type family 
is reflexivity:
\begin{prooftree}
    \AxiomC{refl(a) : $a =_A a$}
    
\end{prooftree}
Symmetry and transitivity of equality can then be proven as theorems - we can construct functions of types that we will 
demonstrate later.
\begin{align*}
    (x=_Ay) \rightarrow (y=_Az) \rightarrow (x=_Az)\\
    (x=_Ay) \rightarrow (y=_xz)
\end{align*}

\subsubsection{Homotopy Interpretation}

One distinctive feature of MLTT is that unlike in classical mathematical theories, propositional equalities in 
MLTT possess computational content; they are not merely an equivalence relation on terms, but a type with it's 
terms possessing computational meaning defined in the type's recursor. This means that in general the type of 
equalities has many inhabitants, a notion wholly dissimilar to the classical conception of equality.

To give a semantic explanation of this, one needs to think of elements of an identity types as ways of 
identifying the two elements rather than mere propositions of their equality. 

To demonstrate this, let's consider a type of $a+(b+c) =_{\mathbb{N}} (a+b)+c$, representing the 
associativity of Natural numbers. One can obtain an inhabitant of such a type by unwrapping the inductive 
definition of addition and doing an induction on c. Alternatively, one can use commutativity to demonstrate 
that a+(b+c) = (c+b)+a and (a+b)+c = c+(b+a) and do induction on a instead. These two proofs cannot be 
judgmentally equated and as such they are different elements of a type with different computational content. 
These two proofs, however, are in their essence "the same proof" except one goes about it in a more 
roundabout way. This notion can be formalised in an equality type between these two proofs. In fact, in general 
case we may have an infinite hierarchy of equality types, each layer proving the equality of terms in the 
previous layer.

Moreover, since propositional equality is a type, it makes sense to define identity types on identity types, 
creating an infinite hierarchy of identifications of different levels. This hierarchy of equalities is usually 
interpreted as a homotopy type (alternatively, an $\infty$-groupoid), with types corresponding to homotopy 
spaces, element of a type to points in spaces, propositional equalities as paths between points and higher 
order equalities as homotopies. This interpretation is especially fitting since, as we will see in the next 
subsection, one cannot in general propositionally distinguish between equivalent elements and hence in 
particular cannot distinguish between two types with the same homotopy structure.

This interpretation gives rise to an extensive cross-pollination between the subjects of type theory and 
homotopy theory, with type theory being a more natural language to express many homotopical theorems than 
set theory is and with homotopy theory suggesting many extensions that can be applied to type theory, such as 
a univalence axiom. 

\subsubsection{Path Induction}
On one hand it makes sense that we should only equate terms which are the same, but on the other hand it is unclear how
this is not a trivial relation and how it is different from judgemental equality. What must be noted, however, is that 
we do not define these propositional equalities in isolation; we define the entire family of identity types dependant on 
elements of a given type at the same time. This allows the identity type to pick up the structure in the underlying 
type and lifting it to the level of equality. 

A recursion principle for the identity type is called the J-rule or Path Induction and it states that in order to prove something 
for all elements of $Id_A(a, b)$ where a and b may vary it is enough to prove it for $refl(a, a)$ for all a.

\subsubsection{Based Path Induction}

An alternative and equivalent definition for identity induction principle is Based Path Induction, which is the 
same as regular Path Induction, except one end of a path is fixed. We will primarily be using this form of 
path induction, as it requires recursion through one less variable. The proof for equivalence of the two can 
be seen in HoTT book \cite{UFP13}.

Formally, given a family of based paths  
\[E_a = \Sigma_{x: A}(x =_A a)\] 
, in order to define a function on this type family one needs only define this function on $(a, refl_a)$. So, 
given an element 
\[ c: C(a, refl_a)\]
We can obtain a function 
\[ f : \Pi_{(x:A)}\Pi_{(p: a=_A x)}C(x, p)\]
s.t. it produces the desired  result of the base case:
\[ f(a, refl_a) \coloneqq c\]

This induction principle, when packaged as an inductor function looks like so:

\[ind_{=_A} : \Pi_{(a: A)}\Pi_{C: \Pi_{(x:A)}(a =_A x) \rightarrow \mathcal{U}} C(a, refl_a) \rightarrow \Pi_{(x:A)}\Pi_{(p: a=_A x)}C(x, p)  \]

with defining equality 

\[ ind_{=_A}(a, C, c, a, refl_a) = c\]

\subsubsection{Properties of Identity}

Equiped with path induction, we are now prepared to prove the fundamental properties of the Identity type.

\begin{lemma}
    Propositional equality is an equivalence relation. 
\end{lemma}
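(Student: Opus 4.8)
The plan is to prove the three defining properties of an equivalence relation—reflexivity, symmetry, and transitivity—by repeated application of the based path induction principle stated above. Reflexivity is immediate: the constructor $\mathrm{refl}(a) : a =_A a$ directly witnesses the reflexive law for each $a : A$, so no induction is needed there.

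For symmetry, I would construct a function of type $\Pi_{(x:A)}\Pi_{(y:A)}(x =_A y) \rightarrow (y =_A x)$. Fixing $x$, I would apply based path induction with the motive $C(y, p) :\equiv (y =_A x)$, reducing the task to providing an inhabitant of $C(x, \mathrm{refl}_x)$, i.e.\ of $x =_A x$. That base case is discharged by $\mathrm{refl}_x$ itself. The inductor then yields the desired map, which I would call $p^{-1}$, together with the judgmental computation rule $(\mathrm{refl}_x)^{-1} \equiv \mathrm{refl}_x$.

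For transitivity I would build a function of type $\Pi_{(x,y,z:A)}(x =_A y) \rightarrow (y =_A z) \rightarrow (x =_A z)$, again by based path induction. The natural approach is to induct first on the path $p : x =_A y$: with $x$ fixed, I take the motive $C(y, p) :\equiv \Pi_{(z:A)}(y =_A z) \rightarrow (x =_A z)$, so the base case at $(x, \mathrm{refl}_x)$ asks for a term of type $\Pi_{(z:A)}(x =_A z) \rightarrow (x =_A z)$, which is just the identity function $\lambda q.\,q$. Applying the inductor produces the concatenation operation, commonly written $p \cdot q$, satisfying $\mathrm{refl}_x \cdot q \equiv q$.

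The main obstacle is less in any single calculation and more in correctly setting up the motives so that the induction type-checks: because based path induction only varies one endpoint, the fixed variable and the family $C$ must be chosen so that the base-case obligation genuinely reduces to something inhabited by a reflexivity term or an identity function. In particular, for transitivity one must be careful that the motive is itself a $\Pi$-type over the remaining path, rather than attempting to induct on both paths simultaneously; once the quantifier structure is arranged correctly, each base case is routine.
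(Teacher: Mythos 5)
Your proposal is correct, and for reflexivity and symmetry it matches the paper's argument exactly: reflexivity comes from the constructor, and symmetry is a single path induction whose base case is discharged by $\mathrm{refl}_x$. The one genuine difference is in transitivity. The paper applies path induction twice, first on $p : x =_A y$ (reducing to $y \coloneqq x$, $p \coloneqq \mathrm{refl}_x$) and then on $q : x =_A z$ (reducing to $z \coloneqq x$, $q \coloneqq \mathrm{refl}_x$), so its concatenation is defined only by the clause $\mathrm{tran}(\mathrm{refl}_x, \mathrm{refl}_x) = \mathrm{refl}_x$. You instead perform a single induction on $p$ with the $\Pi$-type motive $C(y,p) \coloneqq \Pi_{(z:A)}(y =_A z) \rightarrow (x =_A z)$, closing the base case with the identity function $\lambda q.\, q$. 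Both type-check and both prove the lemma, but they yield different judgmental behaviour: your version gives $\mathrm{refl}_x \cdot q \equiv q$ for \emph{arbitrary} $q$, whereas the paper's double induction only computes on the pair $(\mathrm{refl}_x, \mathrm{refl}_x)$, leaving $\mathrm{refl}_x \cdot q = q$ as merely a propositional equality to be proved separately. The paper's version, in exchange, treats the two path arguments symmetrically, which can be convenient when later proving coherence laws such as associativity. Your closing remark about the danger of inducting on both paths is slightly misdirected, though: sequential double induction (as the paper does) is perfectly legitimate here, since after the first induction the second path $q : x =_A z$ has a free endpoint $z$; the real trade-off is the strength of the resulting computation rule, not well-formedness.
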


\begin{proof}
    Reflexivity is immediate by the constructor of the identity type.

    For symmetry, we need to construct a function of the type $sym: (x = y) \rightarrow (y = x)$ for every type A
    and every tokens x and y. We do this by pattern matching - to construct such a function, by path induction,
    we only need to consider one special case of $refl_x: (x = y)$ which we send to $refl_y: (y=x)$, producing 
    a function of the needed type.

    Transitivity is done in the same way, to construct a function $tran: (x = y) \rightarrow (y = z) \rightarrow (x = z)$,
    we apply path induction twice. We need to construct such function for all $x, y, z: A$ and all $p : x = y $
    and $q : y = z$. By path induction on p we may assume y is x and p is $refl_x$; then, by induction on q 
    we may assume z is x and q is $refl_x$. Then, we may define a function to be $tran(refl_x, refl_x) = refl_x$, 
    finishing the proof.
\end{proof}

The key to understanding the Identity type and the role it plays in HoTT is to see how it interacts with other 
type theoretic constructions. Informally, Identity types represent the indistinguishables of Type Theory; all 
constructs possible in the language of type theory respect the indistinguishability of elements for which the 
identity type is inhabited in the sense that it is impossible to prove a proposition involving one of the 
identical elements but not the other. We will later make this notion precise using transports and Univalence 
axiom. 

\subsubsection*{Uniqueness principle}

Every type in MLTT has a uniqueness principle, stating that every element of that type is equivalent to some 
canonical element of that type. An example of a uniqueness principle for a coproduct type is:

\begin{lemma}
    For every element $c : A + B$, there exists either an element of type $c =_{A+B} a$ for some $a : A$ or an 
    element of type $c =_{A+B} b$ for some $b: B$.
    
    \[ \Pi_{c : A+B}( (\Sigma_{a:A} c =_{A + B}inl(a)) + (\Sigma_{b:B} c =_{A+B}inr(b ))   ) \]
\end{lemma}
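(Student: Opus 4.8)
The plan is to construct the required dependent function directly by induction on the coproduct, without any appeal to path induction. Writing $C : (A+B) \to \mathcal{U}$ for the target family
\[ C(c) \coloneqq \left(\Sigma_{a:A}\, c =_{A+B} inl(a)\right) + \left(\Sigma_{b:B}\, c =_{A+B} inr(b)\right), \]
the goal is to produce an element of $\Pi_{c:A+B} C(c)$. The key observation is that every element of $A+B$ is, by construction, built from one of the two constructors $inl$ or $inr$, and on these canonical forms the demanded identity is witnessed simply by reflexivity.

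First I would invoke the dependent eliminator (inductor) for the coproduct against the motive $C$. This reduces the task to supplying two families of witnesses: one assigning to each $a:A$ an element of $C(inl(a))$, and one assigning to each $b:B$ an element of $C(inr(b))$. For the left case, unfolding $C(inl(a))$ gives a coproduct whose left summand is $\Sigma_{a':A}\, inl(a) =_{A+B} inl(a')$; I would take the left injection of the pair $(a, refl_{inl(a)})$. Symmetrically, for the right case I would take the right injection of $(b, refl_{inr(b)})$. The inductor then assembles these two cases into a single function of type $\Pi_{c:A+B} C(c)$, which is exactly the statement.

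The only point requiring care, and the main (though mild) obstacle, is the typechecking of the two reflexivity witnesses against the dependent motive. In the $inl$-branch the inductor's computation rule identifies the general $c$ with $inl(a)$, so that the term $refl_{inl(a)} : inl(a) =_{A+B} inl(a)$ has precisely the type $c =_{A+B} inl(a)$ demanded by $C$; the $inr$-branch is handled identically. Keeping this branch-local judgemental equality straight is the whole content of the verification. Once this bookkeeping is in order the construction is complete, and it is worth noting that the entire argument is carried by the induction principle of the coproduct rather than by the identity type, illustrating how the uniqueness principle of a type is really just a restatement of its own induction principle relative to the canonical forms.
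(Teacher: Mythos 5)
Your proof is correct and takes essentially the same route as the paper: case analysis on the coproduct's constructors, with each branch witnessed by reflexivity. If anything, your version is the more precise one --- the paper loosely invokes the \emph{recursion} principle and writes the branches as bare $refl$ terms, whereas you correctly observe that the motive $C$ depends on $c$ (so the dependent eliminator is required) and supply the full witnesses, the left injection of $(a, refl_{inl(a)})$ and the right injection of $(b, refl_{inr(b)})$.
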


\begin{proof}
    We use the recursion principle for coproduct types by defining 
    \[uniq_{A+B}(inl(a)) = refl_inl(a)\] 
    and 
    \[uniq_{A+B}(inr(b)) = refl_inl(b)\]
    Then, for every element c of $A + B$ we have 
    \[uniq_{A+B}(c) : (\Sigma_{a:A} c =_{A + B}inl(a)) + (\Sigma_{b:B} c =_{A+B}inr(b ))\]
\end{proof}

This principle, when viewed through a lens of homotopy interpretation means that every type has a "skeleton" 
generated by it's constructors, and the entire type is homotopically equivalent to this skeleton; furthermore, 
it states that it makes sense to in general work "up to homotopy".

Similarly, we have a uniqueness principle for identities which is a little tricky to state. In order to state 
uniqueness principle for identities we want to show that every element of $a =_A b$ is of the form 
$refl_a : a =_A a$, but the two above types are different. To be able to state the uniqueness principle we need 
to be working in a larger type $E_a \coloneqq \Sigma_{x:A}(a =_A x)$ of equalities with a. Then we have:

\begin{lemma}
    Let $a:A$ and $E_a \coloneqq \Sigma_{x:A}(a =_A x)$. Then there is an element 
    \[ uniq : \Pi_{(x,p): E_a} (x, p) =_{E_a} (a. refl_a)\]
\end{lemma}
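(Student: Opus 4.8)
The plan is to prove this by based path induction, exactly the recursion principle for the identity type recorded above. The statement asserts that the based path space $E_a$ is contractible with centre $(a, \mathrm{refl}_a)$, so the entire content is to produce, for each $(x,p) : E_a$, a path connecting it to this centre.

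First I would choose the motive
\[ C(x,p) \coloneqq \big( (x,p) =_{E_a} (a, \mathrm{refl}_a) \big), \]
a family over $x : A$ and $p : a =_A x$ valued in $\mathcal{U}$. Based path induction reduces the task of constructing a section $f : \Pi_{(x:A)}\Pi_{(p : a =_A x)} C(x,p)$ to producing a single element of the base case $C(a, \mathrm{refl}_a)$. But $C(a, \mathrm{refl}_a)$ is just the type $(a, \mathrm{refl}_a) =_{E_a} (a, \mathrm{refl}_a)$, which is inhabited by reflexivity. So I would set $c \coloneqq \mathrm{refl}_{(a, \mathrm{refl}_a)}$ and invoke the inductor to obtain $f$ with defining equation $f(a, \mathrm{refl}_a) \equiv \mathrm{refl}_{(a, \mathrm{refl}_a)}$. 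Finally, since $f$ is delivered in curried form, I would uncurry it over the dependent pair — feeding $(x,p) : E_a$ into its two arguments — to obtain the desired $uniq : \Pi_{(x,p):E_a}(x,p) =_{E_a}(a,\mathrm{refl}_a)$.

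The hard part will be not any calculation but seeing why based path induction is applicable at all, and this is the conceptual crux the lemma is built to illustrate. One cannot phrase the uniqueness principle directly inside a fixed identity type $a =_A b$: the putative equation $p = \mathrm{refl}_a$ fails to typecheck, because $p : a =_A b$ and $\mathrm{refl}_a : a =_A a$ inhabit different types unless $b \equiv a$. Passing to $E_a$ is precisely what repairs this, since $(x,p)$ and $(a, \mathrm{refl}_a)$ both live in the single type $E_a$. Moreover, the induction must generalise $x$ and $p$ simultaneously — fixing the one endpoint $a$ while letting the other endpoint and the path vary together — which is exactly the shape of based path induction, and is why the earlier remark stressed that we define the whole family of identity types at once rather than a single equality in isolation. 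It is tempting to worry that this collapses propositional equality to judgemental equality, but it does not: the principle only permits the reduction to the canonical constructor $\mathrm{refl}$ when an endpoint is left free, and here that reduction is legitimate exactly because we quantify over all of $E_a$ rather than over a pinned-down pair of endpoints.
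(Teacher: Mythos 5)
Your proof is correct, and it is the standard argument: the motive $C(x,p) \coloneqq \bigl((x,p) =_{E_a} (a,\mathrm{refl}_a)\bigr)$ is a legitimate family over $x:A$ and $p : a =_A x$ (both sides live in the single type $E_a$, which is exactly why the lemma is phrased over $E_a$), the base case $C(a,\mathrm{refl}_a)$ is inhabited by $\mathrm{refl}_{(a,\mathrm{refl}_a)}$, and the inductor $\mathrm{ind}_{=_A}$ as stated in the paper yields the curried section; the final uncurrying step uses the dependent eliminator ($\Sigma$-induction, not just the recursor the paper wrote down, since your target family depends on the pair), which is a minor but real footnote. The paper, by contrast, gives no proof at all: it declares the lemma ``outside the scope of this dissertation as it requires a bit of Category Theory'' and defers to \cite{UFP13}. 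Your argument shows that deferral is unnecessary --- no category theory enters, only based path induction exactly as the paper formulates it, which is how the HoTT book itself proves contractibility of the based path space. Your proof also repairs a genuine gap in the paper's own logical structure: the later theorem asserting that path induction is equivalent to transports plus the uniqueness principle says in its proof that ``we showed that path induction implies the other two in the lemmas above,'' yet for the uniqueness half this was never shown, only cited; your derivation is precisely that missing direction. Finally, your closing discussion is the right conceptual point to stress: the induction is legitimate because the endpoint $x$ and the path $p$ vary together, in contrast with the unprovable $\Pi_{p : a =_A a}\,(p = \mathrm{refl}_a)$, which would collapse the higher path structure (UIP) and is blocked precisely because there both endpoints are pinned.
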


\begin{proof}
    A proof of this lemma is outside the scope of this dissertation as it requires a bit of Category Theory. 
    The proof can be found as Lemma 2.3.2 in \cite{UFP13}.
\end{proof}

\subsubsection*{Functoriality and substitution of equals for equals}

Here we make the notion of "indistinguishabiltiy of identicals" or "Leibniz principle" more precise; in 
particular, we will show that functions respect propositional equality and that predicates cannot differentiate 
between propositionally equal elements.

\begin{lemma}[Functoriality of identity \cite{UFP13}]
    Let $f: A \rightarrow B$ be a function. Then there is a family of functions
    \[ ap_f : \Pi_{x,y:A}(x =_A y) \rightarrow (f(x) =_B f(y))\]
    s.t. $ap_f(refl_x) = refl_{f(x)}$
    That is, for all the identical elements x, y their images under f are also identical.
\end{lemma}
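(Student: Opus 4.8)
The plan is to construct $ap_f$ directly by based path induction, i.e. by the inductor $ind_{=_A}$ set out above. I would treat the first endpoint $x : A$ as the fixed basepoint of the induction and regard $y$ together with the path $p : x =_A y$ as the variable data over which to recurse. The whole content of the lemma, including the computation equation, will then fall out of a single application of $ind_{=_A}$ together with its defining equality, so no genuinely new idea beyond path induction is needed.

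Concretely, for a fixed $x$ I would choose as the motive the family
\[ C \coloneqq \lambda (y : A). \lambda (p : x =_A y). (f(x) =_B f(y)), \]
so that $C : \Pi_{(y:A)}(x =_A y) \rightarrow \mathcal{U}$, matching the shape required by $ind_{=_A}$. The base fibre $C(x, refl_x)$ reduces to $f(x) =_B f(x)$, which is inhabited by $refl_{f(x)}$, so I take the base term to be $c \coloneqq refl_{f(x)}$. Feeding these to the inductor and abstracting over $x$ yields
\[ ap_f \coloneqq \lambda (x : A). ind_{=_A}(x, C, refl_{f(x)}), \]
a term of exactly the advertised type $\Pi_{(x,y:A)}(x =_A y) \rightarrow (f(x) =_B f(y))$. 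The required judgmental equation is then immediate from the defining equality $ind_{=_A}(a, C, c, a, refl_a) = c$ of the inductor, which specialises to $ap_f(refl_x) = refl_{f(x)}$.

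The argument is essentially routine once the induction is set up, and the only step requiring real care is the choice of motive $C$. One must let $C$ depend on both the endpoint $y$ and the path $p$, rather than fixing $y$ prematurely, and must check that $C$ is well formed in the sense that it lands in a universe for every $(y,p)$ and that its base fibre $C(x, refl_x)$ reduces to a type that $refl_{f(x)}$ genuinely inhabits. Getting this dependency right is precisely what allows the lemma to be discharged by a single use of based path induction, and I expect it to be the main (if modest) obstacle.
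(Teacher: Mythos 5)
Your proposal is correct and takes essentially the same approach as the paper: both arguments consist of a single application of path induction with motive $(f(x) =_B f(y))$ and base case $refl_{f(x)}$, from which the computation rule $ap_f(refl_x) = refl_{f(x)}$ follows by the inductor's defining equality. The only cosmetic difference is that you use the based inductor $ind_{=_A}$ with $x$ fixed and abstract over $x$ afterwards, while the paper invokes the unbased J-rule on the full family $P(x,y,p) \coloneqq (f(x) =_B f(y))$; the paper treats these two forms as equivalent, so nothing substantive separates the proofs.
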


\begin{proof}
    Let $P: \Pi_{x,y:A}(x =_A y) \rightarrow U$ be given by $P(x, y, p) \coloneqq (f(x) =_B f(y))$.

    We can construct an element of $\Pi_{x:A}P(x,x,refl_x)$ as $\lambda x.refl_f(x)$. Then by path induction, 
    exists \[ ap_f : \Pi_{x,y:A}(x =_A y) \rightarrow (f(x) =_B f(y))\] as required and $ap_f(refl_x) = refl_{f(x)}$
    by the computation rule.
\end{proof}

Viewed frim the lens of Homotopy interpretation, we can read this result as all definable functions being 
continuous.

\begin{lemma}[Transports \cite{UFP13}]
    Given a type family $P : A \rightarrow U$ (which can be viewed as a predicate in A), whenever $x, y : A$
    are identical, that is there is an element $p: x =_A y$ there exists a function 
    \[ p_* : P(x) \rightarrow P(y)\]
\end{lemma}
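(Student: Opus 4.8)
The plan is to prove this by path induction (the J-rule), exactly as the equivalence properties of identity were established above. Since I want the transport $p_*$ to be defined uniformly in the path, I would first strengthen the goal to constructing an element of $\Pi_{(x,y:A)}(x =_A y) \rightarrow (P(x) \rightarrow P(y))$; specializing this to a particular $p : x =_A y$ then yields the desired $p_*$.

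The first step is to choose the motive for the induction. I would set $C(x,y,p) \coloneqq (P(x) \rightarrow P(y))$, a type family over $x,y : A$ and $p : x =_A y$ valued in $U$. This is precisely the family to which the inductor $ind_{=_A}$ applies.

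The second step is to supply the base case. Path induction demands an element of $C(x,x,refl_x) = (P(x) \rightarrow P(x))$ for each $x : A$, and the canonical such element is the identity function $\lambda u.u$. Feeding the motive $C$ and this base case into the induction principle produces the desired dependent function, and hence a function $p_* : P(x) \rightarrow P(y)$ for every $p : x =_A y$. The computation rule for $ind_{=_A}$ then records that transport along reflexivity is the identity, $(refl_x)_* \equiv \lambda u.u$, which is the expected coherence.

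There is no deep obstacle here, as this is a textbook application of the J-rule; the only point requiring care is conceptual rather than computational. One must read the existential claim ``there exists a function $P(x) \rightarrow P(y)$'' constructively, as an instruction to \emph{build} that function by induction on the path, and recognise that the correct motive is the function-type family $C(x,y,p) \coloneqq (P(x) \rightarrow P(y))$ rather than something phrased in terms of the data of $P$ more directly. Once the motive is fixed the base case is forced, and the remainder is mechanical.
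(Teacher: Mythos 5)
Your proposal is correct and follows essentially the same route as the paper's proof: path induction reduces the problem to the case $p \coloneqq refl_x$, where the identity function $\lambda (a: P(x)).a : P(x) \rightarrow P(x)$ serves as the transport. You merely spell out the motive $C(x,y,p) \coloneqq (P(x) \rightarrow P(y))$ and the computation rule explicitly, which the paper leaves implicit.
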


We can read this lemma as "whenever two elements are related by an identity type, they are propositionally 
indistinguishable in a sense that if a predicate is true for one of them, it is also true for another".

\begin{proof}
    By path induction it suffices to show the result for $x \coloneqq y$ and $p \coloneqq refl_x$. Then we can take 
    \[ p_* = \lambda (a: P(x)).a : P(x) \rightarrow P(x)\]
\end{proof}

The above results show that Identity Type as defined by us captures the indiscernibility in the sense that the 
language of MLTT treats elements related by identity type as indiscernible. In fact, the definition of Path 
Induction is designed precisely in a way to capture this:  

\begin{theorem}
    In a language of MLTT without path induction, path induction is equivalent to indiscernibility of identicals 
    (existence of transports) and uniqueness principle for identity types
\end{theorem}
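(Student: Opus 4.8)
The statement is a biconditional, and since the earlier development adopts \emph{based} path induction, the plan is to pair it with the based path space $E_a \coloneqq \Sigma_{x:A}(a =_A x)$ and prove the two directions separately. The forward direction is essentially already in hand. Indiscernibility of identicals is exactly the \emph{Transports} lemma, which was established by path induction, and the uniqueness principle is a one-line consequence of based path induction: to inhabit $\Pi_{(x,p):E_a}\,(x,p) =_{E_a} (a,refl_a)$ one applies the inductor to the family $D(x,p) \coloneqq \big((x,p) =_{E_a} (a,refl_a)\big)$, whose base case $D(a,refl_a)$ is inhabited by $refl_{(a,refl_a)}$. Note that this construction also yields the judgmental computation rule $uniq(a,refl_a) \equiv refl_{(a,refl_a)}$, a fact that will matter below. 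Hence the content of the theorem is the reverse implication: recovering path induction from transport and uniqueness alone.

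For the reverse direction I would reconstruct the inductor $ind_{=_A}$ directly. Fix $a : A$, a family $C : \Pi_{(x:A)}(a =_A x) \to \mathcal{U}$, and a base point $c : C(a,refl_a)$. First I would repackage $C$ as a single family over the based path space, $\tilde C \coloneqq rec_{\Sigma}(\mathcal{U}, C, -) : E_a \to \mathcal{U}$, so that $\tilde C(x,p) \equiv C(x,p)$. For a general pair $(x,p) : E_a$ the uniqueness principle supplies $uniq(x,p) : (x,p) =_{E_a} (a,refl_a)$. Because path induction is not yet available, I would obtain path inversion from transport alone: for $q : u =_{E_a} v$ set $sym(q) \coloneqq q_*\big(refl_u\big)$, transporting in the family $w \mapsto (w =_{E_a} u)$, which sends $refl_u : (u =_{E_a} u)$ to an element of $(v =_{E_a} u)$. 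Applying this to $uniq(x,p)$ gives a path $(a,refl_a) =_{E_a} (x,p)$ along which I transport $c$ in the family $\tilde C$, defining
\[
 ind_{=_A}(a, C, c, x, p) \coloneqq \big(sym(uniq(x,p))\big)_*(c) : \tilde C(x,p) \equiv C(x,p).
\]
Currying over $(x,p)$ then yields the required $f : \Pi_{(x:A)}\Pi_{(p:a=_A x)} C(x,p)$.

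The delicate part, and the step I expect to be the main obstacle, is the computation rule $f(a,refl_a) = c$. Unfolding the definition, $f(a,refl_a)$ is $c$ transported in $\tilde C$ along $sym(uniq(a,refl_a))$, so everything hinges on the behaviour of the contraction at its centre. If $uniq(a,refl_a) \equiv refl_{(a,refl_a)}$ — precisely the judgmental computation rule produced by the forward construction above — then $sym(refl) \equiv refl$ (transport along $refl$ being the identity by transport's own computation rule), whence $f(a,refl_a) \equiv c$ and the equivalence is strict. The crux is therefore to pin down exactly what data the word ``uniqueness'' carries. Taking it to be a bare propositional contraction with no computation rule is too weak: one would then have to renormalise $uniq$ so that its value at the centre becomes $refl$, and the coherences needed for that (such as $q \cdot sym(q) = refl$) are themselves ordinarily proved by path induction, which is circular here. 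The natural and intended formulation is thus that uniqueness comes equipped with its basepoint computation rule — automatic whenever it is derived from path induction — and under that reading the reverse construction recovers path induction with its judgmental computation rule intact. The genuine work of the proof accordingly lies in this bookkeeping of computation rules rather than in the construction of $f$, which is routine once transport supplies path inversion.
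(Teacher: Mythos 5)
Your proof is correct, and its core move is the same as the paper's: use the uniqueness principle to obtain a path connecting an arbitrary $(x,p) : E_a$ to the centre $(a, refl_a)$, then transport the base-case datum along it to inhabit the family at $(x,p)$. Two places where you are more careful than the paper are worth noting. First, the paper inverts the uniqueness path by citing ``symmetry of identity,'' but that symmetry lemma was proved earlier \emph{by path induction}, which is precisely what is unavailable in this theorem's hypotheses; your derivation of inversion from transport alone, $sym(q) \coloneqq q_*(refl_u)$ in the family $w \mapsto (w =_{E_a} u)$, repairs this circularity. Second, the paper stops once it has shown that a predicate holding at $(a, refl_a)$ holds at every $(b,p)$, and never checks the defining equation $ind_{=_A}(a, C, c, a, refl_a) = c$ that accompanies its own statement of based path induction; your observation that this computation rule is recovered judgmentally only if the uniqueness term (and transport) carry their own basepoint computation rules is a genuine issue the paper glosses over, and reading the hypotheses as including those rules is the sensible resolution. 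So the construction is the same, but your version closes two gaps in the paper's argument.
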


\begin{proof}
    We showed that path induction implies the other two in the lemmas above.

    Now, let us assume the existence of transports and the uniqueness principle for identity types.
    As before, let $E_a = \Sigma_{x:A}(a =_A x)$. Let P be any predicate on $E_a$ and $(b,p)$ be any element 
    of $E_a$.

    By existence of transports, there is a map 
    \[ f: (((b, p) =_A (a, refl_a))) \rightarrow P((a, refl_a)) \rightarrow P((b, p))\]

    By the uniqueness principle we get that $(a, refl_a) =_A (b, p)$ is inhabited. By symmetry of identity, so 
    is $(b, p) =_A (a, refl_a)$. Let the inhabitant of that type be q. Then 
    \[ f(q) : P((a, refl_a)) \rightarrow P((b, p))\] 
    as desired.

    Thus, if we want to show that a predicate P holds for every $(b, p) : E_a$, we only need to show that P
    holds for $(a, refl_a)$ giving us the statement of based path induction.
\end{proof}

\subsection{Inductive types}

In order to define more complicated types such as Natural Numbers, one can employ the inductive definitions. A
generic inductive type is "freely generated" by it's generators, some of which can take elements of that same 
type as inputs. A generic element of such a type, then, is a tree-like structure of generators, with leafs 
representing the constructors that do not take elements of it's own type as inputs and all the other nodes 
representing ones that do. 

The easiest example of such an inductive type is Natural Numbers ($\mathbb{N}$), generated by 0 (a constant constructor) and 
succ (a function from $\mathbb{N}$ to $\mathbb{N}$). A generic member of $\mathbb{N}$ is then a tree consisting 
of any number of succ and 0 nodes; in this particular case, since the arity of succ is 1, all such trees are 
non-branching.

The recursive principle for a generic inductive type defines a function on that type by specifying what a 
function must do on the generators; this suffices since the generators of an inductive type act as a "spanning
set", akin to a basis in a vector space in that every (canonical) elemnt of the type can be written as a 
successive application of generators.

Such recursors act similar to fold functions from functional programming, taking in a tree-like element of an 
inductive type and replacing every generator in it's construction by a function of a required arity. A recursor 
for a list, for example, in order tp generate a function $List[A] -> B$ replaces every concatenation generator 
with a function $g :B -> A -> B$ and the empty list constructor with an element of B:

\begin{multicols}{3}
    \begin{forest}
        [:
            [1]
            [:
                [2]
                [:
                    [3]
                    [:
                        [4]
                        [$\lbrack \rbrack$]
                    ]
                ]
            ]
        ]
    \end{forest} \break

    \begin{forest}
        [g
            [1]
            [g
                [2]
                [g
                    [3]
                    [g
                        [4]
                        [b]
                    ]
                ]
            ]  
        ]
    \end{forest}

    \[ g(1, g(2, g(3, g(4, a)))) \]
\end{multicols}

All the type constructors mentioned above (except function types and identity types) are a specific example 
of a general recursive type where all the constructors are arity 0.

\subsubsection{Uniqueness of inductive types}

Above we mentioned that natural numbers $\mathbb{N}$ can be defined as an inductive type freely generated by 
$0 : \mathbb{N}$ and $succ : \mathbb{N} \rightarrow \mathbb{N}$. However, there is nothing preventing us from 
defining a different datatype $\mathbb{N}'$ freely generated by $0' : \mathbb{N'}$ and 
$'succ' : \mathbb{N}' \rightarrow \mathbb{N}'$. Such datatype will have an identical looking recursion principle 
and will satisfy all the properties $\mathbb{N}$ does, but would be syntactically distinct. 

This highlights the fact that the identity type as defined by Martin L\"{o}f is too restrictive. It identifies 
functions based on their definition, while mathematicians usually identify functions extensionally; it also 
identifies types based on their presentation, while mathematicians usually identify sets based on isomorphism.

Such isomorphic constructions arise in the mathematics all the time, be it natural numbers and non-negative 
integers, lists and vectors of all sizes, etc. We, as mathematicians, know that those are really "the same" 
and would use the two interchangeably. This is broadly unproblematic, but it does sweep a lot of latent 
computational content under the rug.

When two types are isomorphic in a way shown above, we can identify them by defining two functions 
\begin{itemize}
    \item $f \coloneqq rec_{\mathbb{N}}(\mathbb{N}', 0', succ')$
    \item \item $f \coloneqq rec_{\mathbb{N'}}(\mathbb{N}, 0, succ)$
\end{itemize}
that coerce between the two types by replacing the constructors of one with constructors of another. 

With that, we can freely substitute one type for another by transferring every function defined on one type 
to another, applying the function, and then bringing the result back:

\[ double' \coloneqq \lambda n.f(double(g(n)))\]

Explicit handling of such isomorphism is quite cumbersome, as one needs to always coerce betwen the two 
isomorphic types explicitly. This motivates many extensions of type theory with coarser notions of equality, 
such as \cite{altenkirch1999extensional} and \cite{hofmann1995extensional}. Tthe most widespread extension of
this type is borrowed from Homotopy Theory by Voevodsky and is a Univalence Axiom presented below.

\subsection{Univalence Axiom \cite{Martin18}}

In order to formally state this axiom, we need first talk about equivalences.

\begin{definition}[Homotopy \cite{UFP13}]
    Let $f,g : \Pi_{x:A}P(x)$ be two (dependent) functions. A homotopy from f to g is a dependant function of 
    type 
    \[ (f \sim g) \coloneqq \Pi_{x: A} (f(x) = g(x))\]
\end{definition}
A homotpy captures the idea of two functions being extensionally equivalent up-to homotopy; this notion \
coincides with two functions being homotopical in homotopy theory.

\begin{definition}[Equivalence \cite{UFP13}]
    Let f be a function $f: A \rightarrow B$. f is an equivalence if the following type is inhabited:
    \[ isequiv(f) \coloneqq (\Sigma_{g: B \rightarrow A} (f \circ g \sim id_B)) \times (\Sigma_{h: B \rightarrow A} (h \circ f \sim id_A))\]
    
\end{definition}

Equivalence, in it's essence, captures an idea of an isomorphism between two types, but "up to homotopy"; it 
ensures the existence of inverses up-to-homotopy. In fact, left inverses will always be right inverses and vice 
versa; the two are not required to be the same in the above definition due to problems with higher coherence 
in proofs involving equivalences.

\begin{definition}
    The two types A and B are said to be equivalent if there is an equivalence between them; that is, if the 
    below type in inhabited:

    \[ A \cong B \coloneqq \Sigma_{f: A \rightarrow B}isequiv(f)\]

\end{definition}

\begin{lemma}
    The following type is inhabited:
    \[ IdToEq : \Pi_{X, Y: U} (X =_U Y) \rightarrow (X \cong B)\]
\end{lemma}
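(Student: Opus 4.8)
The plan is to prove this by path induction, which is exactly the tool designed for defining functions out of an identity type. (I read the codomain as $X \cong Y$; the ``$B$'' in the statement appears to be a typo for $Y$.) I would fix $X : U$ and use based path induction with the family
\[ C : \Pi_{(Y:U)}(X =_U Y) \rightarrow U, \qquad C(Y, p) \coloneqq (X \cong Y). \]
By the based path induction principle, to obtain the desired section $\Pi_{(Y:U)}\Pi_{(p : X =_U Y)} C(Y, p)$ it suffices to supply a single base case $c : C(X, refl_X)$, i.e.\ an element of $X \cong X$. The whole content of the lemma therefore reduces to producing a canonical equivalence of a type with itself.

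For that base case I would take the identity equivalence. Let $id_X \coloneqq \lambda(x:X).x : X \rightarrow X$ be the underlying map, so that I must exhibit an inhabitant of $isequiv(id_X)$. Unfolding the definition of equivalence, this asks for a left inverse $g$ with a homotopy $id_X \circ g \sim id_X$ and a right inverse $h$ with a homotopy $h \circ id_X \sim id_X$. In both slots I would choose the inverse to be $id_X$ itself, and supply the homotopy
\[ \lambda(x:X).refl_x : \Pi_{(x:X)}\bigl(id_X(id_X(x)) = id_X(x)\bigr), \]
which typechecks because $id_X(id_X(x))$ reduces to $x$ by the computation rule for function types. Pairing $id_X$ with this inhabitant of $isequiv(id_X)$ gives the required $c : X \cong X$, and feeding $c$ to based path induction yields $IdToEq$, with the computation rule guaranteeing $IdToEq(X, X, refl_X)$ is precisely this identity equivalence.

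There is no serious obstacle here: once the base case is identified as ``the identity is an equivalence,'' everything follows mechanically. The only point needing a little care is the bookkeeping forced by the definition of $isequiv$, which separates the left and right inverse data rather than asking for a single two-sided inverse; in the reflexivity case, however, both halves collapse to $id_X$ paired with the pointwise-$refl$ homotopy, so none of the higher coherence subtleties flagged after the definition of equivalence actually arise. As an alternative organisation one could instead define the underlying map as the transport $p_*$ along the constant-family $\lambda(Z:U).Z$ and then prove separately, again by path induction, that transport along any path is an equivalence; this is essentially the same argument, since path induction reduces transport along $refl_X$ to $id_X$.
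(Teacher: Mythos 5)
Your proof is correct and is essentially the paper's argument: both reduce, by path induction, to exhibiting the identity equivalence $id_X$ as its own left and right inverse via the pointwise-$refl$ homotopy (and you rightly note the ``$B$'' in the statement should be $Y$). The only cosmetic difference is that the paper takes the underlying map to be the transport $p_*$ and then inducts to show it is an equivalence, which is exactly the alternative organisation you describe in your final sentences, so the two proofs coincide in substance.
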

\begin{proof}
    Consider an inhabitant p of $X =_U Y$. Let $p_*$ be the associated transport function. Then $p_*$ is an 
    equivalence.

    To prove this, we only need to consider a special case when $p = refl_x$, in which case $p_* = id_X$ which 
    is an equivalence with it being it's own left and right inverse.
\end{proof}

\begin{definition}[Univalence]
    Univalence is a property of the universe type. It is represented by a type 
    \[ isUnivalent(U) \coloneqq \Pi_{X, Y : U}isEquiv(IdToEq(X, Y))\]
\end{definition}

In general, universe may or may not be univalent. Univalence of universes, however, is consistent with MLTT and 
so it is impossible to disproof. As such, the type coercion captured by univalence always exists in the theory, 
but univalence makes it more accessible.

Univalence axiom asserts that the universe is univalent. 

\begin{definition}[Univalence Axiom]
    There is an element 
    \[ univ : isUnivalent(U)\]    
\end{definition}

\subsection{Sets in Type Theory}

Since in general type theory has an infinite-dimensional path structure of identity types, it is often easier 
to work with truncations of it.

\begin{definition}[Mere Proposition]
    In type theory we call a type \textbf{mere proposition} if it has only one element up to equality. Such 
    types satisfy a predicate
    \[ isProp(P) \coloneqq \Pi_{x, y: P} x =_P y\]
\end{definition}

A type is mere proposition if it is equivalent to either unit type (truth) $\mathbbm{1}$ or bottom type 
(falsehood) $\bot$.

Any proposition in type theory can be reduced to a mere proposition by propositional truncation:

\begin{definition}[Propositional Truncation]
    Propositional truncation of a type A is a type $\|A\|$; we have that for every $a:A$ there is $|a| :\|A\|$
    and for every $x, y : \|A\|, x =_{\|A\|} y$
\end{definition}

Propositional truncation makes every type function like a proposition in classical logic. In particular, many 
results of classical logic, such as LEM, are true for mere propositions. Such a transformation, however, erases 
the computational content of the proposition and as such breaks constructivity of terms that use such a type.

\begin{definition}[Set]
    A set is a type for which every identity type is a mere proposition:
    \[ isSet(A) \coloneqq \Pi_{x, y: A}isProp(x =_A y)\]
\end{definition}

Sets in type theory behave similar to normal sets, with no computational content contained in equality. 

In general, instead of cutting off identities at the first level, one may instead truncate at n'th level, 
ontaining an n-truncated type.

\begin{definition}[n-truncated type]
    A type is 1-truncated if it is a set. A type is n+1-truncated if for all elements of that type the identity 
    type is n-truncated.
\end{definition}

\section{Canonicity and Computability of Type Theory}

In this section we discuss the univalence axiom, what it means for HoTT and it's inherent problems. 

\subsection{Canonicity and Computability}
Dependant Type Theory in it's most basic form enjoys some great computational properties. It has decidable 
judgemental equality and type checking (which is required by Martin L\"{o}f's meaning explanation) and it 
furthermore has a property of canonicity.

\begin{definition}[Canonical Term]
    Canonical terms are of a type are exactly those introduced by a contructor. There are two ways of defining 
    canonical terms : 
    
    \begin{itemize}
        \item Canonical terms are closed terms produced entirely by constructors  (such as S(S(0))). This 
definition corresponds to the concept of eager evaluation - for a term to be considered canonical it needs to 
be "fully evaluated"
        \item Canonical terms are closed terms introduced by a constructor, no matter if the input of the 
constructor is fully evaluated. For example, $\lambda x. x + s(0))$ is considered a canonical term of 
$\mathbb{N} \rightarrow \mathbb{N}$. This definition corresponds to the concept of lazy evaluation, since the 
input of a function needs not be always fully evaluated.
    \end{itemize}
    These two ways of defining closed terms are equivalent in base MLTT and only correspond to different 
    evaluation strategies. As such, Lazy evaluation is usually adopted as default. They are, however, different 
    in HoTT since the theory looses canonicity.
\end{definition}

\begin{definition}[Canonicity]
    A Type Theory is said to enjoy canonicity if every closed term of a given type reduces to a canonical term.
\end{definition}

Canonicity corresponds to an idea that every computation in type theory terminates; every program has a 
canonical term to which it reduces which indicates the result of computation. For example, $2 + 2$ reduces to 
$4$.

In this context, judgemental equality can be seen as precisely equating all the terms with same canonical form 
(as one would expect).

Canonicity of MLTT is a product of all constructions in it being defined in a style of natural deduction - 
with only way of introducing new terms being the canonical terms of a type, which always have a corresponding 
eliminators and computation rules. This is precisely the reason why one in general avoids adding axioms to a 
type theory.

\subsection{Decidability}

As mentioned above, judgemental equality and type checking in MLTT is decidable. This is, similar to canonicity,
thanks to a strict way in which the theory is defined as a natural deduction style system. Every term, by very  
nature of its construction, possesses a type since only way of introducing new terms is using strictly typed 
cosntructors. Decidability of judgemental equality follows from canonicity, but it can be guaranteed in 
different ways, similar to type checking.

\subsection{Problem with Univalence}

As we talked about above, Univalence makes life of mathematicians a lot easier, allowing them to convert 
between equivalent structures freely, while leaving all the actual computational coercion implicit and letting 
the language figure it out.

When considering type theory as a language, univalence is also incredibly useful as it allows the developer to 
reuse any code written for one type (class) to be used for every isomorphic type. Normally, one would require 
to explicitly coerce between the two types and prove for each construction that it respects this coercion.

A problem with axiomatic handling of univalence, however, is that it presents a term of a univalence type which claims 
to do the computational lifting described above, but in fact contains no computaional content; as such, every 
program that uses univalence to construct a proof of identity of two isomorphic structures will inevitably get 
stuck trying to do the substitution of one structure for another, as there is no eliminator for identity proof 
produced by univalence. Univalence is a well-typed but computationally stuck term, unable to reduce to a 
Canonical form, undermining the canonicity of the theory.

A judgemental handling of univalence is currently one of the biggest research areas in HoTT, as it is the 
biggest cornerstone towards the full power of HoTT being utilised in a computational context, such as proof 
assistance. There are a few modifications to HoTT that obtain a judgemental notion of univalence such as 
cubical type theory, but no judgemental presentation compatible with Martin L\"{o}f's type theory has been 
found yet.

\subsection{Two-Dimensional Type Theory}

Two-Dimensional (groupoid-level) type theory (2TT) \cite{Licata12} aims to give a judgemental presentation of 
equivalence for a 2-dimensional 
type theory, making univalence a provable result rather than an axiom. 

To do this, equivalence is considered judgmentally, with a judgement of a form 
$\Gamma \vdash \alpha : M \simeq_A N$
meaning that $\alpha$ is the evidence of equivalence of M and N as objects of A is added. The language is then 
extended with judgements ensuring that type and term families respect equivalences. Identity type is then 
interpreted as a hom-type of equivalences.

Key innovation of 2TT is a new judgement of a form $\Gamma \vdash \alpha : M \simeq_A N$ which states that 
$\alpha$ is the evidence of equivalence of M and N as objects of A. A computational content of such equivalence 
is then made explicit by providing judgemental operations ensuring that every type and term family is functorial
in their indices.
Univalence and function extensionality then follow as rules of equivalence and their actions account for a 
computational content in equivalence proofs.

There are 3 ways of introducing equivalences, corresponding to rules of an equivalence relation :
\begin{itemize}
    \item \[ \vdash refl_{\theta}^{\Delta} : \theta \simeq_{\Delta} \theta\]
    \item \[ \delta : \theta_1 \simeq_{\Delta} \theta_2 \vdash \delta^{-1} : \theta_2 \simeq_{\Delta} \theta_1 \]
    \item \[ \delta_1 : \theta_1 \simeq_{\Delta} \theta_2, \delta_2 : \theta_2 \simeq_{\Delta} \theta_3 \vdash \delta_2 \circ \delta_1 : \theta_1 \simeq_{\Delta} \theta_3\]
\end{itemize}
Furthermore, these constructors respect associativity and inverse laws, making equivalence structure behave like 
a groupoid.

Judgemental preservation of equivalence requires that type and term families are functorial in their indices. 
This is guaranteed by the equivalence eliminators defined as operations  map for type families and resp for 
term families:

\begin{prooftree}
    \AxiomC{$\Gamma \vdash B(x) : A \rightarrow U$}
    \AxiomC{$\Gamma \vdash \alpha: M_1 \simeq_A M_2$}
    \AxiomC{$\Gamma \vdash M : B[M_1/x]$}
    \TrinaryInfC{$\Gamma \vdash map_{x: A.B} \alpha M : B[M_2/x]$}
\end{prooftree}

\begin{prooftree}
    \AxiomC{$\Gamma, x:A \vdash F : B$}
    \AxiomC{$\Gamma \vdash \alpha: M_1 \simeq_A M_2$}
    \BinaryInfC{$\Gamma \vdash resp (x.F) \alpha : F[M_1/x] \simeq_B F[M_2/x]$}
\end{prooftree}

\begin{remark}
    In order to define above maps more easily, 2DTT is presented as a substitution calculus, with a judgement 
    for context substitution.
\end{remark}

Each type constructor is then equipped with judgements ensuring that it behaves properly with equivalences and 
above maps. One can find the entire formulation of this language in the original paper \cite[(Pages 3-5, 11)]{Licata12}, 
together with 
the explanation of all the constructions.

Judgements of 2TT ends up having a simple categorical interpretation in a category of groupoids \cite{licata20112}:
\begin{itemize}
    \item Context judgement $[\![ \Gamma]\!]$ is a category.
    \item Substitution judgement $[\![\Gamma \vdash \theta : \Delta]\!]$ is a functor $[\![\theta]\!] : [\![\Gamma]\!] \rightarrow [\![\Delta]\!]$
    \item Equivalence judgement $[\![\Gamma \vdash \delta : \theta_1 \simeq_{\Delta} \theta_2]\!]$ is a natural transformation $[\![\delta]\!] : [\![\theta_1]\!] \simeq [\![\theta_2]\!]$
    \item Type judgement $[\![A]\!]$ is a functor $[\![A]\!] : [\![\Gamma]\!] \rightarrow GPD$ where GPD is a large category of groupoids and fucntors
\end{itemize}

I omit full details of a proof, but it can be shown that a Boolean type gets interpreted as a discrete category 
with two objects, thus proving consistency (statement "true = false" is refutable in one semantic interpretation,
and as such not provable).

\subsubsection{Canoncirity}

In order to prove canonicity, we defines a semantic interpretation of 2DTT into syntactically presented 
groupoids and functors, saying that a logical expression is reducible if it is a member of these semantic domains;
open terms are reducible if they take reducible terms to reducible terms.

\subsubsection*{Syntactically Presented Groupoids and Functors}

\begin{definition}
    A Groupoid $\langle \Gamma \rangle$ is presented by context $\Gamma$ iff:
    \begin{itemize}
        \item $Ob(G)$ is a subset of the equivalence classes of substitutions $\bullet \vdash \theta : \Gamma$ modulo definitional equality (such an equivalence class is denoted $\llangle \theta \rrangle$ from now on). 
        \item Set of morphisms $\llangle \theta_1 \rrangle \rightarrow_G \llangle \theta_2 \rrangle$ is a subset of equivalences classes of equivalences $\bullet \vdash \theta_1 \simeq_{\Gamma} \theta_2$ modulo definitional equality.
        \item Identity at $\llangle \theta \rrangle$, composition and inverses are given by equivalence type constructors.
    \end{itemize}
\end{definition}

\begin{definition}
    A groupoid $\langle A \rangle$ is presented by type A iff:
    \begin{itemize}
        \item $Ob(G)$ is a subset of the equivalence classes of terms of A modulo definitional equality. 
        \item Set of morphisms $\llangle M_1 \rrangle \rightarrow_G \llangle M_2 \rrangle$ is a subset of e$\llangle M_1 \simeq_{\Gamma} M_2 \rrangle$.
        \item Identity at M, composition and inverses are given by equivalence type constructors.
    \end{itemize}
\end{definition}

\begin{definition}
    A functor $\langle M \rangle : \langle A \rangle  \rightarrow \langle B \rangle$ is presented by $x : A \vdash M : B$ iff 
    \begin{itemize}
        \item For all $\llangle N \rrangle \in Ob(\langle A \rangle), \\ \langle M \rangle(\llangle N \rrangle) = \llangle M[N / x] \rrangle$
        \item For all $\llangle \alpha \rrangle : \llangle N_1 \rrangle \rightarrow_{\langle A \rangle} \llangle N_2 \rrangle, \\ \langle M \rangle(\llangle \alpha \rrangle) = \llangle M[\alpha / x] \rrangle$ 
    \end{itemize}
\end{definition}

\begin{definition}
    A functor $\langle A \rangle : \langle \Gamma \rangle  \rightarrow GPD$ is presented by type A (where GPD stands for category of groupoids and functors) iff 
    \begin{itemize}
        \item For all $\llangle \theta_1 \rrangle \in Ob(\langle \Gamma \rangle), \\\langle A \rangle(\llangle \theta_1 \rrangle)$ is presented by $A[\theta_1]]$
        \item For all $\llangle \delta \rrangle : \llangle \theta_1 \rrangle \rightarrow_{\langle \Gamma \rangle} \llangle \theta_2 \rrangle, \\ \langle A \rangle(\llangle \delta \rrangle)$ is presented by $x : A[\theta_1].map_A \delta_x$ 
    \end{itemize}
\end{definition}

\begin{definition}
    Given groupoids $\langle \Gamma \rangle$ and $\langle \Delta \rangle$, the set of reducible substitutions 
    $RedSub(\langle \Gamma \rangle, \langle \Delta \rangle)$ is those substitutions $\Gamma \vdash \theta : \Delta$ for which 
    \begin{itemize}
        \item For all $\llangle \theta_1 \rrangle \in Ob(\langle \Gamma \rangle), \\ \llangle \theta[\theta_1] \rrangle \in Ob(\langle \Delta \rangle)$
        \item For all $\llangle \delta \rrangle : \llangle \theta_1 \rrangle \rightarrow_{\langle \Gamma \rangle} \llangle \theta_2 \rrangle, \\ \llangle \theta[\delta] \rrangle : \llangle \theta[\theta_1] \rrangle \rightarrow_{\langle \Delta \rangle} \llangle \theta[\theta_2] \rrangle$
    \end{itemize}
\end{definition}

\begin{definition}
    Given groupoids $\langle \Gamma \rangle$ and $\langle \Delta \rangle$ and $\theta_1, \theta_2 \in RedSubst \langle \Gamma \rangle, \langle \Delta \rangle)$
    define the set of reducible equivalences $RedEquiv_{\langle \Gamma \rangle}^{\langle \Delta \rangle}(\theta_1, \theta_2)$
    to contain those equivalences $\Gamma \vdash \delta : \theta_1 \simeq_{\Delta} \theta_2 $s.t. for all $
    \llangle \delta' \rrangle : \llangle \theta_1' \rrangle \rightarrow_{\langle \Gamma \rangle} \llangle \theta_2' \rrangle, \\ \llangle \delta[\delta'] \rrangle : \llangle \theta_1[\theta_1'] \rrangle \rightarrow_{\langle \Delta \rangle} \llangle \theta_2[\theta_2'] \rrangle$
\end{definition}

\begin{definition}
    Given a groupoid $\langle \Gamma \rangle$ and a functor $\langle A \rangle : \langle \Gamma \rangle \rightarrow GPD$,
    define a set of reducible terms $RedTm^{\langle \Gamma \rangle}$ to be the terms $\Gamma \vdash M :A $ s.t.
    \begin{itemize}
        \item For all $\llangle \theta_1 \rrangle \in Ob(\langle \Gamma \rangle), \llangle M[\theta_1] \rrangle \in Ob(\langle A \rangle (\llangle \theta_1 \rrangle))$
        \item For all $\llangle \delta \rrangle : \llangle \theta_1 \rrangle \rightarrow_{\langle \Gamma \rangle} \llangle \theta_2 \rrangle, \\
        \llangle M[\delta] \rrangle : \llangle map_A (\delta M [\theta_1] \rrangle \rightarrow_{\langle A \rangle (\llangle \theta_2 \rrangle)} \rrangle : \llangle M [\theta_2] \rrangle$
    \end{itemize}
\end{definition}

\begin{theorem}[Fundamental Theorem]
    There exist partial functions $[\![\Gamma ]\!]$  and $[\![ A]\!]$ which send contexts and terms respectively 
    to their presentations while respecting equivalences and making all well-typed expressions reducible:
    \begin{itemize}
        \item If $\Gamma$ is a context, then $[\![\Gamma ]\!]$ is a groupoid presented by $\Gamma$
        \item If $\Gamma \equiv \Gamma',  [\![\Gamma ]\!] = [\![\Gamma' ]\!]$ 
        \item If $\Gamma \vdash \theta : \Delta$, then $\theta \in RedSebst([\![\Gamma ]\!], [\![\Delta ]\!])$
        \item If $\Gamma \vdash : \theta_1 \simeq_{\Delta} \theta_2$, then $\delta \in RedEquiv_{[\![\Delta ]\!]}^{[\![\Gamma ]\!]}(\theta_1, \theta_2) $
        \item If $\Gamma \vdash A \: Type $, then $[\![A]\!]$ is a functor $[\![\Gamma ]\!] \rightarrow GPD$ presented by A
        \item If $\Gamma \vdash A \equiv A' \: Type$, then $[\![A]\!] = [\![A']\!]$
        \item If $\Gamma \vdash M : A $, then $M \in RedTm^{[\![\Gamma ]\!]}([\![A]\!])$ 
        \item If $\Gamma \vdash \alpha : M \simeq_A N$, then $\alpha \in RedEquiv_{[\![A ]\!]}^{[\![\Gamma ]\!]}(M, N)$
    \end{itemize}
\end{theorem}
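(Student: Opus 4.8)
The plan is to prove all eight clauses by a \emph{simultaneous} induction on the derivation of the judgement concerned, treating every judgement form at once. This simultaneity is forced by the mutual dependencies of the presentations: the groupoid $[\![\Gamma, x : A]\!]$ is built from $[\![\Gamma]\!]$ together with the functor $[\![A]\!] : [\![\Gamma]\!] \to GPD$, so the context clause depends on the type clause, while $[\![A]\!]$ is itself a functor on $[\![\Gamma]\!]$, so the type clause depends on the context clause; likewise reducibility of a term (clause seven) feeds into reducibility of the substitutions that extend it (clause three), and the two equivalence clauses (four and eight) feed into the morphism part of every groupoid and functor. I would therefore define the partial functions $[\![-]\!]$ by recursion on derivations and, in lockstep, verify the semantic membership asserted by each clause, noting that clauses two and six are the well-definedness obligations guaranteeing that $[\![-]\!]$ is constant on definitional-equality classes and so descends to the quotients used throughout the presentations.

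For the structural core I would first dispatch the context and substitution cases. The empty context presents the trivial one-object, one-morphism groupoid; context extension presents the total groupoid of $[\![A]\!]$ over $[\![\Gamma]\!]$, a Grothendieck-style construction whose objects are pairs of an object $\llangle\theta\rrangle$ with an object of $[\![A]\!](\llangle\theta\rrangle)$, and I would check it is presented by $\Gamma, x:A$ by reading its objects and morphisms off substitutions $\bullet \vdash (\theta, M) : (\Gamma, x:A)$ modulo definitional equality. For substitutions I would establish the standard closure lemmas up front: the identity substitution is reducible, reducible substitutions compose so that $RedSubst$ is closed under the categorical structure, and weakening and term-extension preserve reducibility. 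These lemmas are exactly what make the induction go through whenever a rule substitutes into a premise.

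The bulk of the argument is the case analysis over the type and term formers, where for each constructor I must show its formation, introduction and elimination rules send reducible inputs to reducible outputs, both on objects (the ordinary canonicity content) and on morphisms (the two-dimensional content). For $\mathbbm{1}$, $\bot$, $+$, $\times$, $\Pi$ and $\Sigma$ the object-level verification is routine folding of the recursor against the definition of $RedTm$. The genuinely two-dimensional obligations involve the functoriality operations: I would show that whenever $M$ is a reducible term and $\delta$ a reducible equivalence, $M[\delta]$ is a reducible equivalence between the two transports of $M$, and that $map$ and $resp$ applied to reducible data yield reducible output that agrees, under the presentation, with the morphism action of $[\![A]\!]$. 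Concretely this means checking that $[\![A]\!](\llangle\delta\rrangle)$ really is presented by the term $x : A[\theta_1].\, map_A\, \delta\, x$, exactly as the definition of a functor into $GPD$ requires.

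The hard part, I expect, will be the coherence of this morphism structure rather than the object-level reductions. Showing that each presented functor genuinely preserves identities, composition and inverses \emph{up to the definitional equalities available in the syntax} is where the naturality squares must be made to commute, and it is precisely here that $map$ and $resp$ have to interact correctly with substitution and with the $refl$, $(-)^{-1}$ and $\circ$ constructors of the equivalence judgement. The two most delicate formers are the identity (equivalence) type, whose presentation is the hom-groupoid and which must reflect the morphisms of $[\![A]\!]$ faithfully, and the universe, interpreted through the $GPD$-valued reading of types, where the reducibility of the equivalence constructor is exactly what yields univalence as a derived rule; getting this $GPD$-valued functor to respect composition coherently, so that no well-typed but stuck term survives, is the crux on which the whole canonicity result turns.
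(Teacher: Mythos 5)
Your plan coincides with the paper's approach: the paper itself gives no details, deferring to \cite{Licata12}, but it describes the proof exactly as you do --- an induction on (type) derivations in which the interpretation $[\![-]\!]$ is defined compositionally for each constructor, with reducibility of all well-typed expressions verified in lockstep. Your elaboration of the mutual dependencies among the eight clauses, the substitution closure lemmas, and the delicate two-dimensional coherence obligations is a faithful expansion of that same strategy rather than a different route.
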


\begin{proof}
    The entire proof can be found in \cite[(Pages 8-10)]{Licata12}. It proves the result by induction on type 
    derivations, defining the interpretation compositionally for each type constructor.
\end{proof}

Now we can prove the canonicity of 2DTT as a corollary:

\begin{proof}
    Given a theorem above, assume $\bullet \vdash M : Boolean$. Then $M \in RedTm^{[\![\bullet ]\!]}([\![Boolean]\!])$, so 
    $M \in RedTm^{[\![\bullet ]\!]}(const(\mathbbm{2}))$. By definition $id \in Ob(\bullet)$, so $\llangle M[id]\rrangle \in 
    Ob(const(\mathbbm{2})(\llangle id \rrangle))$, so $const(\mathbb{2})(\llangle id \rrangle)) = \mathbbm{2}$.
    By definition there are only  2 objects of $\mathbbm{2}$, so $M \equiv true$ or $M \equiv false$.
\end{proof}

\subsection{Limitations}

It is impossible to extend the methodology of 2TT to further dimensions due to problems with coherence 
at identity types beyond the first level. A different judgemental framework would need to be utilised.

As such, limitation of a 2TT is its two-dimensionality. As we discussed above, propositional truncations 
of type theory do not have decidable type checking and judgemental equality as a result of "cutting off" 
computational content of higher equalities. The theory does, however, retain decidability of judgemental 
equality and type checking for types that do not include identity types in their constructions, which is still
a lot of useful types. In fact, if it was possible to extend the judgemental presentation to further dimensions,
more types (ones using first/second/n'th level of identity types) would become decidable, achieving full 
decidability in the limit.

Furthermore, a canonicity in 2TT is achieved by equivalence relations, rather than one-directional computation
rules such as function application. As such, although the theory enjoys canonicity, there is no effective algorithm 
to compute canonical forms. Further work needs to be done to explore a possibility of extending 2TT 
with a one-directional operational evaluation strategy that would ensure that all terms compute to a 
canonical form by a terminating algorithm.

\section{Conclusion}
Dependant type theory is an important recent development for both mathematics and computer programming. On a 
mathematical side of things, HoTT is a promising candidate for a new foundational language, able to formalise 
constructive mathematics in a verifiable way. On a computer scientific side of things, dependant types represent 
a powerful tool for formally verifiable systems and proof assistants.

A univalence axiom is a foundational idea for both fields, being invaluable for representing mathematics in a 
more natural way, studying homotopy theory inside new mathematical foundations. For programming this axiom is 
important as a generic program that is able to lift equivalences across type families. The axiom, however, as 
of yet lacks computational content in the most general presentation of HoTT. Although some attempts have been
made to create simpler theories that attain a judgemental presentation of univalence, more work is needed to 
achieve a computationally-relevant version of univalence in a general setting.

\nocite{*}
\bibliography{mybib}{}
\bibliographystyle{plain}

\end{document}